\newcommand\OBR{\\}
\newcommand\OCR[1]{&#1}
\newcommand\OBR{}
\newcommand\OCR[1]{}
\newtheorem{lemma}{Lemma}
\title{Cache Placement in an NDN Based LEO Satellite Network Constellation}
\author{Rodríguez Pérez, Miguel}
    \affil{atlanTTic research center, Universidade de Vigo, 36212 Vigo, Spain}
    \author{Herrería Alonso, Sergio}
    \affil{atlanTTic research center, Universidade de Vigo, 36212 Vigo, Spain}
    \author{Suárez González, Andrés}
    \affil{atlanTTic research center, Universidade de Vigo, 36212 Vigo, Spain}
    \author{López Ardao, José Carlos}
    \affil{atlanTTic research center, Universidade de Vigo, 36212 Vigo, Spain}
    \author{Rodríguez Rubio, Raúl}
    \affil{atlanTTic research center, Universidade de Vigo, 36212 Vigo, Spain}
    \author{Miguel Rodríguez-Pérez,~\IEEEmembership{Senior Member,~IEEE},\\%
        Sergio Herrería-Alonso,~\IEEEmembership{Member,~IEEE}, %
        Andrés Suárez-Gonzalez, \\José Carlos López-Ardao and Raúl
        Rodríguez-Rubio%
        \thanks{Manuscript received XXXXX 00, 0000; revised XXXXX 00, 0000;
            accepted XXXXX 00, 0000.}%
        \thanks{This work has received financial support from grant
            PID2020-113240RB-I00, 
            financed by MCIN/AEI/10.13039/501100011033, and by the Xunta de Galicia
            (Centro singular de investigación de  Galicia accreditation 2019--2022) and the
            European Union (European Regional Development Fund---ERDF).}%
        \thanks{Authors are with the atlanTTic research center,
            Universidade de Vigo, 36310 Vigo.}
    }
\DeclareMathOperator{\D}{D}
\DeclareMathOperator{\N}{N}
\DeclarePairedDelimiter\C{\lVert}{\rVert}
\newcommand\copyrighttext{%
  \footnotesize \textcopyright{} 2022 IEEE\@. Personal use of this material is
  permitted.  Permission from IEEE must be obtained for all other uses, in any
  current or future media, including reprinting/republishing this material for
  advertising or promotional purposes, creating new collective works, for resale
  or redistribution to servers or lists, or reuse of any copyrighted component
  of this work in other works. DOI:\ \href{https://doi.org/10.1109/TAES.2022.3227530}{10.1109/TAES.2022.3227530}
  } \newcommand\copyrightnotice{%
\begin{tikzpicture}[remember picture,overlay]
\node[anchor=south,,xshift=60pt,yshift=68pt] at (current page.south) {\fbox{\parbox{\dimexpr\textwidth-\fboxsep-\fboxrule\relax}{\copyrighttext}}};
\end{tikzpicture}%
}
\begin{document}

\maketitle
\copyrightnotice{}

\begin{abstract}
    The efforts to replace the successful, albeit aging, TCP/IP Internet
    architecture with a better suited one have driving research interest to
    information-centric alternatives. The Named Data Networking (NDN)
    architecture is probably one of the main contenders to become the network
    layer of the future Internet thanks to its inbuilt support for mobility,
    in-network caching, security and, in general, for being better adapted to
    the needs of current network applications. At the same time, massive
    satellite constellations are currently being deployed in low Earth orbits
    (LEO) to provide a backend for network connectivity. It is expected that,
    very soon, these constellations will function as proper networks thanks to
    inter-satellite communication links. These new satellite networks will be
    able to benefit from their greenfield status and the new network
    architectures. In this paper we analyze how to deploy the network caches of
    an NDN-based LEO satellite network. In particular, we show how we can
    jointly select the most appropriate caching nodes for each piece of content
    and how to forward data across the constellation in two simple alternative
    ways. Performance results show that the caching and forwarding strategies
    proposed reduce path lengths up to a third with just a few caching nodes
    while, simultaneously, helping to spread the load along the network.
\end{abstract}

\begin{IEEEkeywords}
    Information-centric networking, named-data networking, satellite
    networking, optimization
\end{IEEEkeywords}

\section{Introduction}%
\label{sec:introduction}

\IEEEPARstart{D}{uring} the last few years we are witnessing a commercial race
for providing low-latency, high-bandwidth Internet access with the help of
massive constellations of satellites in low Earth orbit (LEO). Probably, the
best well-known example is SpaceX's Startlink network, operating, as of July
2022, about \num{2500} satellites~\cite{mcdowell_jonathan_startlink_2022}.
However, there are many other competing networks in different completion
phases~\cite{uk_space_agency_18m_2019,hindin_application_2019,telesat_telesat_2020}.
Although the satellites in these networks will initially act just as packet
relays between pairs of ground stations, the main benefits will be obtained
when traffic can travel directly across the constellation with the help of
inter-satellite links (ISL)~\cite{bhattacherjee_network_2019,pan_scalable_2021}.

Almost simultaneously to this commercial interest in satellite networks, the
network research community has started paying attention to a new networking
paradigm focused not on providing connectivity between distant devices, but on
the data acquisition itself. Several proposals, under the umbrella of the
Information Centric Networking (ICN) paradigm~\cite{koponen_data-oriented_2007},
try to create a new global network architecture that one day may replace the
current TCP/IP Internet. Proponents of these architectures claim that they are
better suited to current applications due to their natural support for consumer
mobility~\cite{xia_adapting_2021}, in-network caching~\cite{ghasemi_far_2020},
multicast transmission~\cite{lederer_adaptive_2014,rainer_investigating_2016}
and in-built security~\cite{tourani_security_2018}.

In this work we focus on the network caching characteristics of an ICN network
when applied to a massive LEO satellite constellation. We will assume that each
satellite will carry four inter-satellite links (ISL) to communicate with its
four closest neighbors. Although linking with the closest available satellites
is not the only feasible alternative~\cite{bhattacherjee_network_2019}, neither
the best one, it helps to keep things manageable and has already been proposed
by several works~\cite{handley_delay_2018,chaudhry_laser_2021}. As for the
actual ICN architecture, we will use the Named-Data Networking (NDN)
proposal~\cite{zhang_named_2014}, as it is already in a mature
state. In an NDN network, data is directly addressed by its unique name, rather
than by its location, as done by IP networks. For this, the network uses two
different kinds of packets: \emph{Interest} packets, that carry a request for a
named piece of data; and \emph{Data} packets, carrying the actual data back to
the requester(s) 
following the reverse path used by the Interest packet.


In this paper we explore the regular topology of satellite networks to give answer to what
appear to be two contradictory but highly desired characteristics. On the one
hand, we want to spread the load on the network so that information from
different sources follows disjoint, albeit equal-cost, network paths, thus maximizing network capacity. On the other hand, traffic
of Interest packets for the same information should converge at common nodes so
as to maximize the effectiveness of in-network caching. Thus, we provide:
\begin{enumerate}
    \item A simple Interest \emph{forwarding strategy} that, while following the
          shortest path to a target producer, always finds the nearest cache while
          spreading the load evenly across the whole network;
    \item a discussion about which nodes should be involved in caching data from
          each producer to maximize cache hit-rate;
    \item and an algorithm that finds the best possible cache locations for a
          given satellite constellation working in tandem with our forwarding
          strategy.
\end{enumerate}
To assess the adequacy of our
proposed solution, we simulate a large LEO constellation network to find that,
with just a few caching nodes, the number of transmissions required to obtain a
named piece of data is just halved. This has profound positive effects both in the used
capacity and the incurred transmission delay.

The rest of this paper is organized as follows.
Section~\ref{sec:problem-description} describes the considered scenario. Then,
Section~\ref{sec:cache-placement} discusses alternatives for cache placement.
The experimental results are shown in Section~\ref{sec:results}. Finally, we lay
out our conclusions in Section~\ref{sec:conclusions}.

\section{Problem Description}%
\label{sec:problem-description}

We consider a scenario consisting in a LEO satellite constellation with
inter-satellite communication capabilities and a set of ground stations acting
as the entry (exit) points of the orbiting network. All orbiting and terrestrial
nodes run an instance of an NDN network protocol.

As shown in
Fig.~\ref{fig:constellation-earth-render}, satellites in a usual LEO constellation are organized in various orbits sharing
the same inclination (planes). The relative positions of individual satellites
in the same orbital plane is kept relatively stable and this permits to keep
connections with both the preceding and following satellites in the same orbit.
Moreover, the distance between the orbits is also stable, permitting also
connections with the nearest satellites in the two immediate neighboring orbital
planes. Thus, assuming four inter-satellite links per satellite, this results in
a grid-like topology for the orbiting part of the network, like the one shown in
Fig.~\ref{fig:constellation-example-grid}.

\begin{figure}
    \centering
    \includegraphics[width=.9\columnwidth]{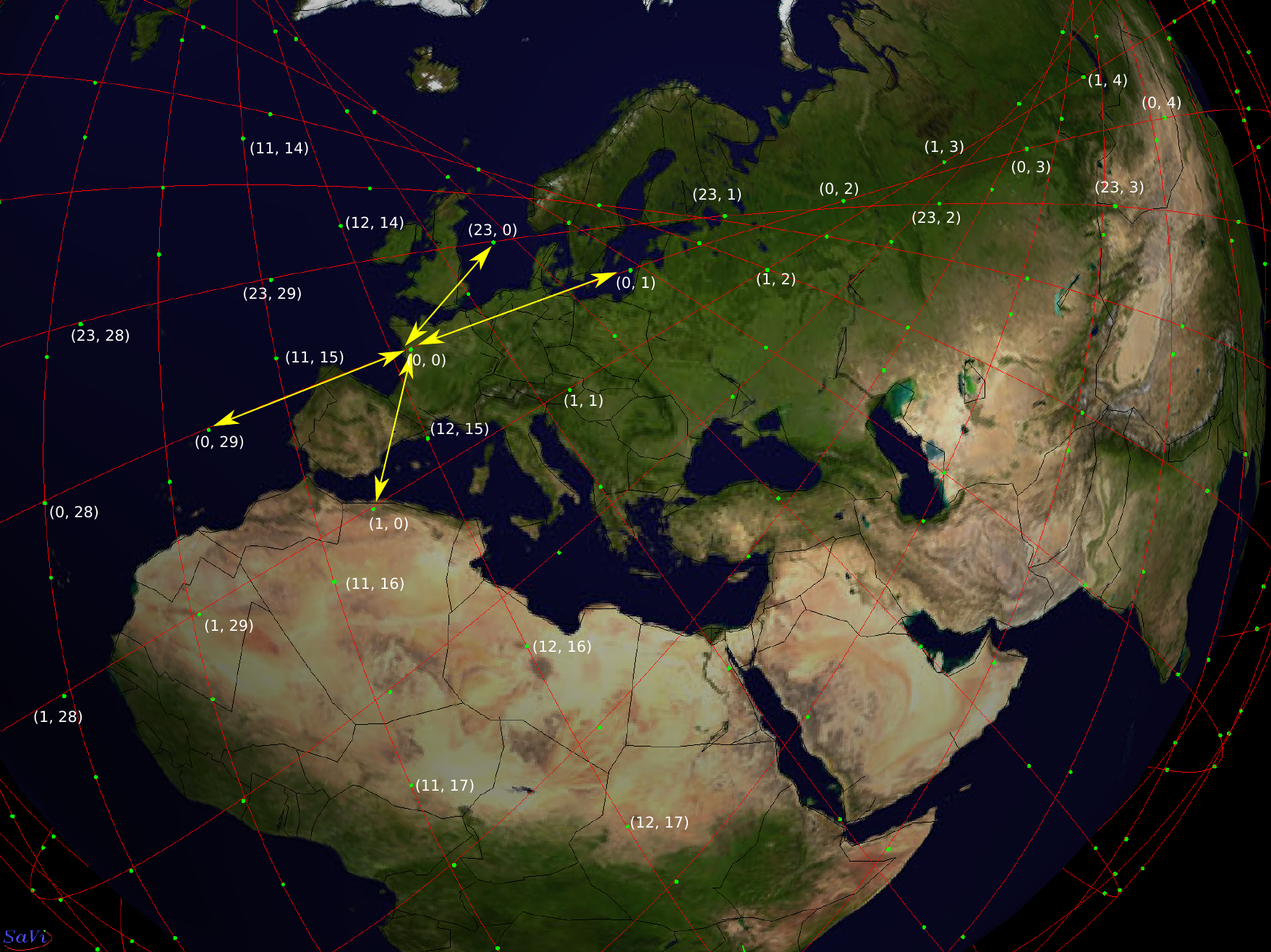}
    \caption{A satellite constellation with 24 orbital planes and 30 satellites
        in each plane with an inclination of 60º as seen from space. We have
        represented the satellite identifiers and, with yellow lines, the four
        possible ISL links of the satellite located at coordinates \((0,0)\).}%
    \label{fig:constellation-earth-render}
\end{figure}

As the connectivity between ground stations and their orbiting counterparts is
subject to frequent changes, we employ two different routing strategies. In the
terrestrial part, the satellite network topology can be simply ignored since the
LEO constellation will be used
as a backbone providing connectivity between any pair of ground stations. Thus,
for some prefixes, the LEO network will provide the best path and will be used
for routing some Interest packets. In that case, the corresponding ground
station will simply relay the Interest packet to any reachable overhead
satellite. Then, the satellites will forward the Interest packet to the nearest
most appropriate terrestrial node, as all the producers are assumed to stay on
the ground. Due to the grid-like connectivity topology of the LEO network, once
the target ground node has been located,\footnote{We do not delve into how this
    information is procured, but it can be readily obtained if nodes use a
    link-state routing protocol, for instance.} the routing is straightforward as we
shall see later.

For the rest of the article we will consider a simplified satellite network,
made up of a single constellation where each satellite has four ISL links, two
with the nearest satellites in the same plane (the immediately ones in
\emph{front} of it and \emph{behind} it), and another two with the
nearest satellites in the plane to port and starboard. Recall that the relative
positions of the satellites do not change as they travel through their orbits.
Although the connectivity pattern gets obscured at the northernmost (and
southernmost) regions of the constellation due to the increased density, the
ordering of the satellites in the same orbital plane is not changed. The same
happens with the relative positions of the successive planes. Therefore, even
though successive orbital planes have some offset, the resulting connectivity
pattern (if the metric is the number of hops) can be represented as a grid
\(\mathcal{G}\). Let then \(\mathcal{G} = \{\mathcal{N}, \mathcal{E}\} \), where
\(\mathcal{N} = \{(x,y), x \in \mathbb{Z}/n_{\mathrm p}\mathbb{Z}, y \in
\mathbb{Z}/n_{\mathrm s}\mathbb{Z} \} \) is the set of satellites, represented
by their coordinates in the constellation, \(n_{\mathrm p}\) and \(n_{\mathrm
        s}\) are, respectively, the number of orbital planes and the number of
satellites per plane, and \(\mathbb{Z}/m\mathbb{Z}\) is the ring of integers
modulo~\(m\). \(\mathcal{E} = \{(n_i, n_j), n_i, n_j \in \mathcal{N} \,|\,
d(n_i,n_j)=1\} \) is the set of ISLs inside the constellation, with \(d(\cdot)
\) a modified taxicab metric that takes into account the modular nature of the
scenario,\footnote{We need to use modular arithmetic to account for the fact
    that there is connectivity both before the first and the last satellite of a
    plane and between the first and the last planes themselves.} so
\begin{equation}%
    \label{eq:taxicab-mod}
    \begin{multlined}
        d(n_i, n_j) = d((x_i, y_i), (x_j, y_j))\OBR
        = \min \{ x_i \ominus x_j, x_j \ominus x_i \}
        + \min \{ y_i \ominus y_j, y_j \ominus y_i \},
    \end{multlined}
\end{equation}
where \(\ominus \) is the subtraction operation in \(\mathbb{Z}/m\mathbb{Z}\).

Without loss of generality, and taking into account the symmetric nature of the
satellite network, we will always consider that the exit node is the node
located at coordinates \((0,0)\) and only examine the behavior of those nodes
located in the top-right quadrant.\footnote{The exit node will be different for
    different prefixes, so there is not a central node of the network, but a central
    node with respect to a given prefix.}
Figure~\ref{fig:constellation-example-grid} depicts an example of such a grid
network, where the nodes in the top-right quadrant have been highlighted.
\begin{figure}
    \centering
    \includegraphics[width=.9\columnwidth]{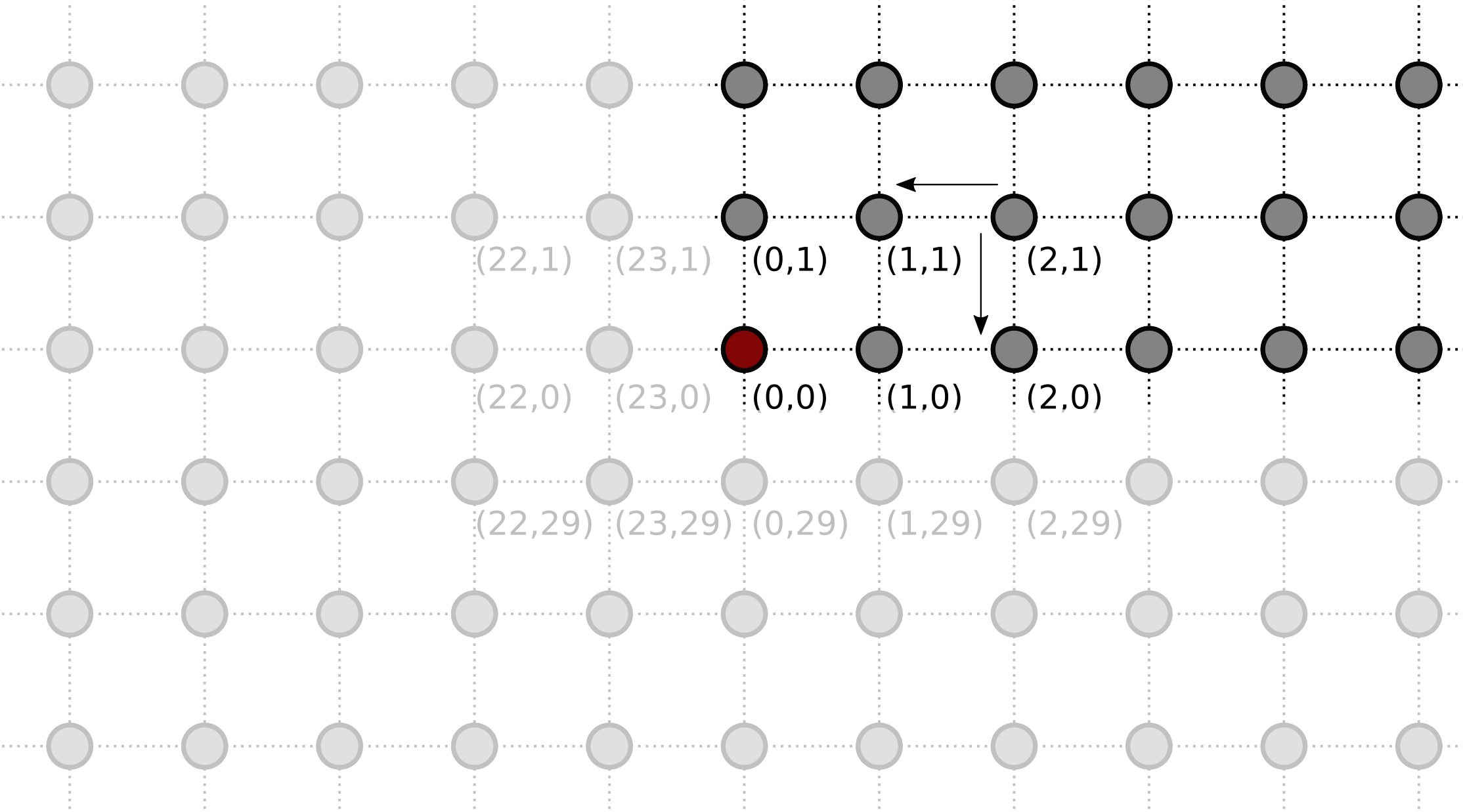}
    \caption{A grid-like representation of the links between neighboring
        satellites of the constellation of
        Fig.~\ref{fig:constellation-earth-render}.}%
    \label{fig:constellation-example-grid}
\end{figure}

\subsection{Forwarding Strategy}%
\label{sec:fwd-stragety}

Before we delve into the specifics of the forwarding strategy, it may be good to
recapitulate how the network layer of the NDN architecture works. In the NDN
architecture no data can be sent to the network unless a node has previously
asked for it. This \emph{pull}-based communication approach is in sharp contrast
with the IP architecture, where any node can \emph{push} data to the network as
long as it knows a destination address. To accommodate to this pull-based
operating manner, NDN defines two different types of network layer packets: the
\emph{Interest} and the \emph{Data} packets. Thus, when a consumer needs to
obtain some data from another node, it sends to the network an Interest packet
that specifies the \emph{name} of the requested data. Then, the NDN routers
forward this Interest packet according to the named requested content (usually
just the prefix of the name) and their configured forwarding mechanism. However,
if they have previously seen this particular named Data, they can directly
return a copy from their cache, if they happen to have stored it. If there is no
copy, they forward the Interest packet to the next NDN router and store the
information about it in a temporal table of pending interests,
called the Pending Interest Table, or PIT\@. Finally, when the Interest reaches a
\emph{producer} for the requested data, it answers with a Data packet. This Data
packet flows back to the requesting consumer(s) 
following the reverse path taken
by the Interest packet. As the Data packet reaches every intermediate NDN
router, they use the information stored in the PIT to forward it to the
destination while they optionally keep a copy stored in their local cache.

Due to the grid-like topology of the satellite network, forwarding an Interest
packet towards the exit node is just a matter of selecting any of the two closer
neighbors. That is, for a satellite in the top-right quadrant, at location
\((i,j)\)  relative to the exit gateway, that means using either \((i-1,j)\) or
\((i,j-1)\) as the next hop node, as exemplified for node \((2,1)\) in
Fig.~\ref{fig:constellation-example-grid}. From a pure forwarding perspective,
both neighbors represent an optimal choice.

However, nodes in an NDN network may cache the contents of any previous Data
packet. If the Interest packet is forwarded to a node already holding a copy of
the requested information, it does not need to be forwarded further and, instead,
the node can reply immediately with the copy. This results in shorter delays and avoids
unnecessary transmissions along the satellite network and even in the satellite
to ground exit link. So, it is important to forward Interest packets in a way
that Interest packets from different nodes converge at some common caching
nodes. However, to avoid overwhelming the storage capacity of the caches, it is
better for different nodes to cache the contents of different prefixes. At the
same time, given that all the links in the satellite network have identical
characteristics, it is important to spread traffic to maximize the
aggregated capacity. Luckily, all these conditions can be simultaneously met if
the decision of whether to cache a piece of content depends on the location of
the candidate caching node relative to the exit location (the center node for a given
prefix). Keep in mind that non-location aware cache management algorithms (like
those based on popularity, freshness\dots{}) can be used in tandem with a
location-aware one to also influence the decision to cache a piece of content.

With all these considerations we propose the following two simple rules for
deciding the next node in forwarding decisions:

\begin{enumerate}
    \item Interest packets must be forwarded without increasing distance to the
          primary source (the one at the origin).
    \item Interest packets should be forwarded to the closest \emph{allowed} cache.
\end{enumerate}

The first rule avoids looping. Even though a caching node closer to the current
node, but further away from the producer, may hold a copy of the content, moving
away from the center causes loops when the content is not found in the
cache.\footnote{If nodes closer to the producer than the presumed cache forward
    traffic to it, then traffic from this caching node will not be able to reach the
    producer when the caching node forwards an Interest for the data for the first
    time since its immediate neighbors would always forward the Interest back to it.
    Certainly, one can devise mechanisms that can solve this scenario, but we feel
    that the added complexity is not worth it.} The second rule not only makes the
traffic converge to a caching node, but also helps to spread the traffic across
the network. Recall that caching nodes are set at positions relative to the
network center for each named prefix.

As a result of these rules, different sets of nodes forward to (are served by)
different caching nodes. Figure~\ref{fig:grid-with-caches-generic} shows the
nodes of the top-right quadrant of a network with three caches and a producer at
the bottom-left corner.
\begin{figure}
    \centering
    \includegraphics[width=.9\columnwidth]{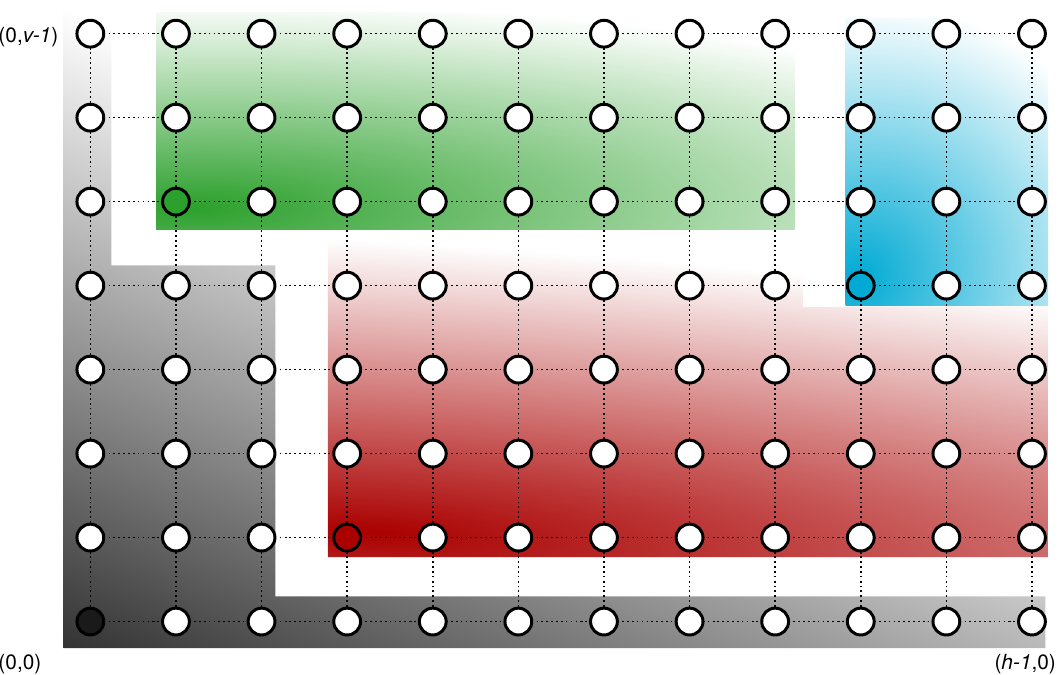}
    \caption{Top-right quarter of the constellation grid, showing the location
        of the producer at the origin and three more arbitrarily placed caches
        (solid colors). The colored regions show the serving cache for each
        node. The satellite network is composed of \(n_{\mathrm p} = 2h\)
        orbital planes with \(n_{\mathrm s} = 2v\) satellites in each plane.}%
    \label{fig:grid-with-caches-generic}
\end{figure}
All nodes in the gray area are served by the original producer at \((0,0)\), as
being served by either the red caching node at \((3,1)\) or the green one at
\((1,5)\) would entail either increasing the value of one of their coordinates.
This would mean forwarding Interests farther from the producer, violating rule 1
above. Note that nodes in the red, green and blue areas are served by nodes
\((3,1)\), \((1,5)\), and \((9,4)\), respectively, since the caching
node in their corresponding region is the closest cache (rule 2) and forwarding to
it gets closer to the producer (rule 1).

What follows is a discussion about where to place caching nodes for each prefix.

\section{Cache Placement}%
\label{sec:cache-placement}

Even though every node is free to opportunistically store any Data packet, we
must select the nodes \emph{responsible} for caching the contents of a given
producer. Their location should become a convergence point of disjoint paths
towards the producer. Moreover, their locations should also be such that they
minimize the average number of hops that an Interest packet must be forwarded
before it encounters such a caching node. This metric clearly saves transmission
capacity and minimizes delay.

\subsection{Regular Cache Placement}%
\label{sec:sq-cache-placement}

One natural way to place the caches arises from subdividing the original region,
i.e., the set of nodes served by the producer, into \(r^2\) identical
subregions, like in Fig.~\ref{fig:sub-rectangular-example}.
\begin{figure}
    \centering
    \includegraphics[width=.9\columnwidth]{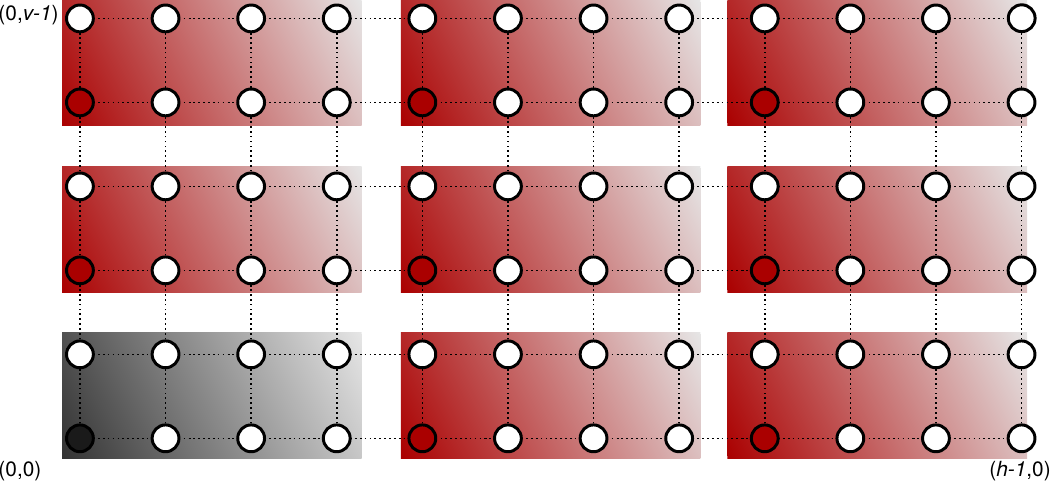}
    \caption{Top-right quarter of the constellation grid showing the location of
        caches as a result of subdividing the original region into \(r^2=9\)
        smaller identical subregions.}%
    \label{fig:sub-rectangular-example}
\end{figure}
It is straightforward to check that in such a placement, each satellite is at
most \((h+v)/r-2\) hops away from the nearest copy, where \(v=n_{\mathrm s}/2\)
is the height of the resulting top-right grid and \(h=n_{\mathrm p}/2\) its
width. It is also quite easy to get the average distance to a cache in the
network, or equivalently, in a single subregion, as
\begin{equation}
    \label{eq:av-distance-sq}
    \D^{\mathrm{reg}} = \frac{\sum_{i=0}^{\frac{h}{r}-1}\sum_{j=0}^{\frac{v}{r}-1}(i+j)}{\frac{h}{r}\frac{v}{r}}
    = \frac{h+v}{2r}-1,
\end{equation}
where we assume that both \(h\) and \(v\) are integer multiples of \(r\) to keep
the expression simple. As all subregions are
identical,~\eqref{eq:av-distance-sq} is approximately the average distance in the whole
network.\footnote{Non caching nodes in the axes are taken into account twice.
    However, the global effect in the average distance is very small for
    sufficiently large networks.}

We have also to consider how many total caches are needed for such an
arrangement in the full satellite topology. It can be easily deduced that, for
\(r^2\) subregions in a single quadrant, we need \((r-1)\) caching nodes in
each axe and \({(r-1)}^2\) caching nodes outside the axes. As there are four
such quadrants, each pair sharing a semiaxis, the total number of caching nodes
in the topology is
\begin{equation}
    \label{eq:total-caches-sq}
    \N^{\mathrm{reg}} = 4{(r-1)}^2+4(r-1)=4r(r-1).
\end{equation}

\subsection{In-Axes Cache Placement}%
\label{sec:axes-cache-placement}

There is also a natural cache placement strategy that consists in considering
only nodes in the axes. In this way, routing becomes even more straightforward.
Now, Interest packets can be forwarded directly first to the nearest axis, and
then to the producer. This ensures that the packet will come across a caching
node in the process. Obviously, if the forwarding node is aware of the precise
location of the nearest cache, or of on which axis it resides, it can still use
this information to forward the Interest packet even to the furthest axis if that
results in reaching a closer caching node.

\begin{figure}
    \centering
    \includegraphics[width=.9\columnwidth]{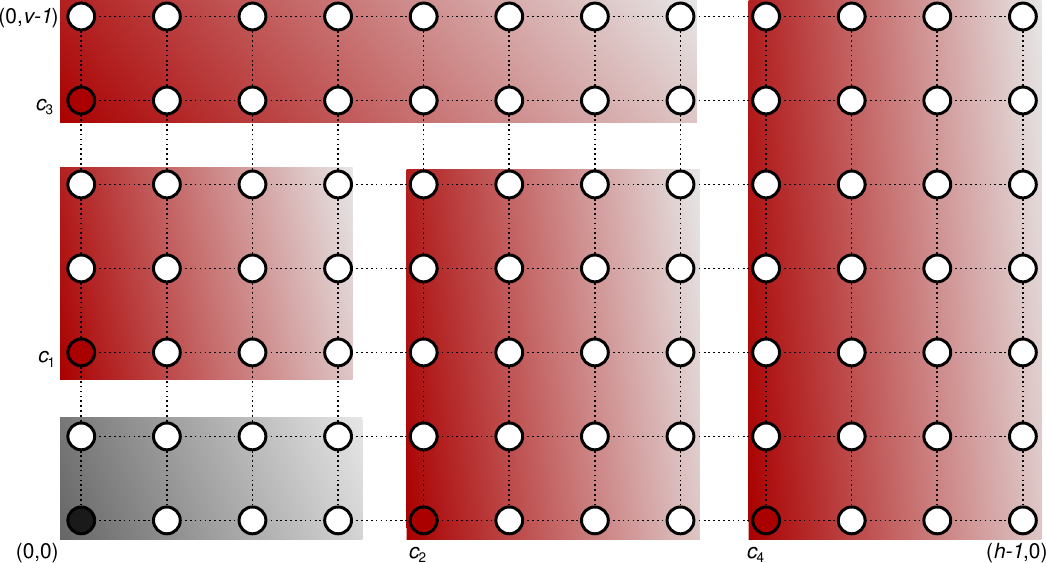}
    \caption{Top-right quarter of the constellation grid showing the regions
        covered by each of the caches located in the axes. In this specific
        scenario \(c_1 = v_1\) and \(c_2=h_1\).}%
    \label{fig:in-axes-example}
\end{figure}
Figure~\ref{fig:in-axes-example} shows the area of influence of several caching
nodes located in the axes of a grid network. Note how the further away from the
center (where the producer is located), the greater area of coverage of each
cache. The location of the caching nodes must be carefully chosen, to
minimize the average distance to a cache in the network.

We can formalize the area \emph{covered} by each cache in the following way.
Consider two sets of ordered caches, one in the vertical axe \(\mathcal V =
\{(0,v_1),\dots,(0,v_V)\} \) and the corresponding set in the horizontal one
\(\mathcal H = \{(h_1,0),\dots,(h_H,0)\} \). Then, the set of nodes served by
node \((h_i, 0) \in \mathcal H\)---resp. \((0, v_i) \in \mathcal V\)---is
\(\mathcal A(h_i) = \{(x,y) \,|\, h_i \le x < h_{i+1},\  y < v_k\} \), where
\(v_k = \max(v_j | (0, v_j) \in \mathcal V \cup \{(0, v)\} \text{ and  } v_j < h_{i+1}) \)
and \(h_{H+1} = (h,0) \)---resp.\ for \(\mathcal A(v_i) \)---. To obtain the
average distance to a cache, we have to first obtain the sum of the distances
from every node to its nearest cache and divide it by the number of nodes. As we
can see, the regions are defined by three parameters: the positions of two
consecutive caches in one axis, and a single cache in the other axis. If we define
\(\C{\mathcal A(h_i)}=\C{(h_i,h_{i+1},v_k)}\) (resp. \(\C{\mathcal
    A(v_i)}=\C{(v_i,v_{i+1},h_k)}\)) as the sum of the distances in the \(\mathcal
A(h_i)\) region, we get
\begin{equation}%
    \label{eq:cost-region-axes}
    \begin{split}
        \C{(a_1,a_2,b)} & =
        \sum_{i=a_1}^{a_2-1}\sum_{j=0}^{b-1}d((i,j),(a_1,0)) \OBR
        \OCR{} =  \sum_{i=a_1}^{a_2-1}\sum_{j=0}^{b-1}(i+j-a_1) \OBR
        \OCR{} =  b(a_2-a_1)\frac{a_2-a_1+b-2}{2}.
    \end{split}
\end{equation}
The average distance in this setup is thus
\begin{equation}%
    \label{eq:avdistance-region-axes-total}
    \begin{split}
        \D^{\mathrm{axes}}& = \frac{\C{\mathcal A(0)} +  \C{\mathcal H} + \C{\mathcal V}}{h v}\OBR
        \OCR{} =\frac{\C{\mathcal A(0)}}{h v}
        + \frac{\sum_{h_i\in\mathcal H}\C{\mathcal A(h_i)} + \sum_{v_i\in\mathcal V}\C{\mathcal A(v_i)}}{h v},
    \end{split}
\end{equation}
where \(\mathcal A(0)\) is the set of nodes served directly by the producer.\footnote{As in the case of the regular cache placement, the value is exact for a single quadrant. For the whole network it is a good approximation when the constellation is large enough.}

We prove in Appendix~\ref{sec:appendix} that placing the caches in an
interleaved way (\(\ldots < h_i < v_j < h_{i+1} < v_{j+1}<\ldots \))
minimizes~\eqref{eq:avdistance-region-axes-total}.

\subsection{A Fast Algorithm to Compute the Optimal in-Axes Cache Locations}

Our next step is to obtain the optimal location of the caches in the axes for a
given number \(N\) of total caches. To keep the notation simple, we will ignore
the 0-valued dimension in each of the caches, so that \((h_i,0)\) becomes
directly \(h_i\)---resp. \((0,v_i)\) becomes \(v_i\)---and, using the fact that
the caches are interleaved, work directly with the vector \(\mathcal C = \{c_1,
c_2, \dots, c_N\} = \mathcal H \cup \mathcal V \), where each element represents
either a cache location in \(\mathcal H\) or in \(\mathcal V\). The problem is
thus to find \(\mathcal C\) that
minimizes~\eqref{eq:avdistance-region-axes-total}, that is, that minimizes
\begin{equation}%
    \label{eq:cost-caches-axes-expanded}
    \begin{split}
        \C{\mathcal C}& = \C{(0, c_2, c_1)} + \C{(c_1,c_3,c_2)}\OBR
        \OCR{\quad} + \C{(c_2,c_4,c_3)} + \C{(c_3,c_5,c_4)} + \dots.
    \end{split}
\end{equation}
Sadly, it is not possible to minimize this function analytically, but the
procedure shown next finds the optimal locations in less than \(h N\)
iterations.\footnote{From now onwards, we will assume for clarity a square
    network with \(h=v\), although the results can be easily extended to the general
    scenario.} The procedure detailed in Algorithm~\ref{alg:cache-placement-axes}
sets the initial location of the caches to those closest to the producer. That
is, for a solution with \(N\) caching nodes, they initially hold positions
\(c_i=i, 1 \le i \le N\). After the initialization, our algorithm displaces the
furthest cache to more distant locations until the cost stops decreasing (lines
6--14). Then, it tries with the next one (lines 5--14), and so on. When all the
caches have been tried, it tries to move the caches again starting with the
furthest one (loop of lines 3--15) if the inner loops have found
a better solution (\emph{finish} set to \emph{false} in line 12). If
\emph{finish} is still \emph{true} after the loop in lines 5--14 ends, then no
better solution has been found and the procedure ends.
\begin{algorithm}
    \begin{algorithmic}[1]
        \State{} \(\mathcal C = \{1, 2, 3, \dots, N\}, c_{N+1} = h \)
        \State{} \(lowest\_cost \gets \C{\mathcal C}\)

        \Repeat{}
        \State{} \(finish \gets \mathrm{true}\)
        \ForAll{\(i \in \{N, \dots, 1\} \)}
        \While{\(c_i + 1 < c_{i+1}\)}
        \State{} \(\mathcal C' = \{c_1, c_2, \dots, c_i + 1, c_{i+1}, \dots,
        c_N\} \)
        \State{} \(current\_cost \gets \C{\mathcal C'}\)
        \If{\(current\_cost < lowest\_cost\)}
        \State{} \(lowest\_cost \gets current\_cost\)
        \State{} \(\mathcal C \gets \mathcal C'\)
        \State{} \(finish \gets \mathrm{false}\)
        \Else{}
        \State{} \textbf{break} while loop
        \EndIf{}
        \EndWhile{}
        \EndFor{}
        \Until{\(finish = \mathrm{true}\)}
    \end{algorithmic}
    \caption{Algorithm for finding the best cache locations along the axes.}%
    \label{alg:cache-placement-axes}
\end{algorithm}

This simple procedure finds the optimal cache locations as can be derived from
Lemma~\ref{lm:proof-optimal} that shows that if moving a cache further from the
producer reduces the total cost and moving the previous one also reduces the
cost, then moving both reduces the cost even more. This ensures that when we
advance an \emph{outer} cache to find a local optimum we are not going to miss a
global optimum  resulting from moving only \emph{inner} caches.
\begin{lemma}%
    \label{lm:proof-optimal}
    Let \(\mathcal C=\{c_1, c_2, \dots, c_i, \dots, c_n\} \) be a set of ordered
    cache locations and \(\Delta_i \mathcal C = \{c_1, c_2, \dots, c_i + 1,
    \dots, c_n\} \). Then, if both \(\C{\Delta_i \mathcal C} < \C{\mathcal
        C}\) and \(\C{\Delta_{i-1}\mathcal C} < \C{\mathcal C}\), it holds
    that \(\C{\Delta_{i-1}\Delta_i\mathcal C} < \C{\Delta_{i-1}\mathcal
        C} \).
\end{lemma}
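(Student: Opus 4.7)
The plan is to isolate the mixed second difference
\[
\Delta := \C{\Delta_{i-1}\Delta_i \mathcal{C}} - \C{\Delta_{i-1}\mathcal{C}} - \C{\Delta_i \mathcal{C}} + \C{\mathcal{C}},
\]
which measures by how much the benefit of incrementing $c_i$ shifts once $c_{i-1}$ has also been incremented. Rearranging gives the telescoping identity
\[
\C{\Delta_{i-1}\Delta_i \mathcal{C}} - \C{\Delta_{i-1}\mathcal{C}} = \bigl(\C{\Delta_i \mathcal{C}} - \C{\mathcal{C}}\bigr) + \Delta,
\]
so, since the parenthesised term is strictly negative by the first hypothesis, it suffices to establish $\Delta \leq 0$.

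Next I would observe, by inspection of the decomposition in~\eqref{eq:cost-caches-axes-expanded}, that exactly two of its summands depend jointly on both $c_{i-1}$ and $c_i$: namely $T_1 = \C{(c_{i-2}, c_i, c_{i-1})}$ and $T_2 = \C{(c_{i-1}, c_{i+1}, c_i)}$. Every other summand involves at most one of the two variables, and therefore contributes nothing to $\Delta$. Substituting the closed form of~\eqref{eq:cost-region-axes} into $T_1$ and $T_2$ and carrying out the four evaluations that define the mixed difference yields, after routine cancellation,
\[
\Delta = 2c_{i-1} - c_{i-2} - c_{i+1} + 1.
\]

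The remaining task---and the place where I expect the real work to lie---is to invoke the second hypothesis $\C{\Delta_{i-1}\mathcal{C}} < \C{\mathcal{C}}$ to force $\Delta \leq -1$. My plan is to expand this inequality using the three summands of $\C{\mathcal{C}}$ that depend on $c_{i-1}$---namely $T_1$, $T_2$, and the additional $T_0 = \C{(c_{i-3}, c_{i-1}, c_{i-2})}$---and then rearrange so that $c_{i+1}$, which appears only inside $T_2$, is isolated on one side. A short computation using $c_i - c_{i-2} \geq c_{i-1} - c_{i-2} + 1$ and the strict ordering $c_{i-3} < c_{i-2}$ reduces the hypothesis to a strict inequality of the form $c_{i+1} - c_{i-1} > c_{i-1} - c_{i-2} + 1$; the integrality of cache positions then tightens this to $c_{i+1} \geq 2c_{i-1} - c_{i-2} + 2$, which is precisely $\Delta \leq -1$. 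The boundary case $i=2$, where $T_0$ is absent and $c_{i-2}=0$, is handled separately and produces the same bound via a shorter calculation.

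Feeding $\Delta \leq -1$ back into the telescoping identity together with the first hypothesis gives $\C{\Delta_{i-1}\Delta_i \mathcal{C}} - \C{\Delta_{i-1}\mathcal{C}} < 0$, which is the conclusion. The principal obstacle is thus confined to the manipulation of the previous paragraph: carefully isolating $c_{i+1}$ in the three-term inequality obtained from $\C{\Delta_{i-1}\mathcal{C}} < \C{\mathcal{C}}$, and tracking the extra $+1$ that integrality supplies so that the bound on $\Delta$ is strict rather than merely non-positive.
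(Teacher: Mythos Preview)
Your argument is correct and lands on exactly the same key inequality as the paper, namely $c_{i+1} > 2c_{i-1} - c_{i-2} + 1$, extracted from the hypothesis $\C{\Delta_{i-1}\mathcal C}<\C{\mathcal C}$ using the three terms $T_0,T_1,T_2$ and the ordering $c_{i-3}<c_{i-2}$ (the paper writes this as $c_{i+1}>2c_{i-1}-c_{i-2}+1+c_{i-2}(c_{i-2}-c_{i-3})/c_i$). The organisation, however, is somewhat different. The paper expands $\C{\Delta_{i-1}\mathcal C}-\C{\Delta_{i-1}\Delta_i\mathcal C}$ directly, obtains a quadratic condition, and then combines it with the \emph{first} hypothesis $\C{\Delta_i\mathcal C}<\C{\mathcal C}$ (which it also expands to a quadratic inequality) to reduce to the linear condition above. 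You instead factor through the mixed second difference $\Delta$: only $T_1$ and $T_2$ survive, so $\Delta=2c_{i-1}-c_{i-2}-c_{i+1}+1$ drops out immediately, and the first hypothesis is consumed verbatim via the telescoping identity rather than being re-expanded. This is a cleaner route to the same destination; the auxiliary bound $c_i-c_{i-2}\ge c_{i-1}-c_{i-2}+1$ you mention is not actually needed, and $\Delta\le 0$ already suffices (you derive the stronger $\Delta\le -1$, which is harmless).
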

\begin{proof}
    For \(\C{\Delta_{i-1}\Delta_i\mathcal C} < \C{\Delta_{i-1}(\mathcal
        C)}\) to be true, it must hold that \(\C{\Delta_{i-1}\mathcal C} -
    \C{\Delta_{i-1}\Delta_i\mathcal C}>0\).

    According to~\eqref{eq:cost-caches-axes-expanded},
    \begin{gather}%
        \label{eq:proof-c_p_c_astk_expanded}
        \begin{split}
            \C{\Delta_{i-1}\mathcal C}& = \C{(0,c_2,c_1)} + \C{(c_1,c_3,c_2)}+\C{(c_2,c_4,c_3)}\OBR
            \OCR{\quad} +\cdots + \C{(c_{i-3},c_{i-1}+1,c_{i-2})}\\
            & \quad + \C{(c_{i-2},c_{i},c_{i-1}+1)}\OBR
            \OCR{\quad} + \C{(c_{i-1}+1,c_{i+1},c_{i})}\OBR
            \OCR{\quad} + \C{(c_{i},c_{i+2},c_{i+1})}+\cdots \\
        \end{split}\\
        \begin{split}
            \C{\Delta_{i-1}\Delta_i\mathcal C} &= \C{(0,c_2,c_1)} + \C{(c_1,c_3,c_2)}+\C{(c_2,c_4,c_3)}\OBR
            \OCR{\quad} +\cdots + \C{(c_{i-3},c_{i-1}+1,c_{i-2})} \\
            & \quad + \C{(c_{i-2},c_{i}+1,c_{i-1}+1)}\OBR
            \OCR{\quad} + \C{(c_{i-1}+1,c_{i+1},c_{i}+1)}\OBR
            \OCR{\quad} + \C{(c_{i}+1,c_{i+2},c_{i+1})}+\cdots.
        \end{split}
    \end{gather}

    So \(\C{\Delta_{i-1}\mathcal C} - \C{\Delta_{i-1}\Delta_i\mathcal C}>0\) expands to

    \begin{multline}
        \label{eq:proof-c_p_c_astk_expanded_full}
        \C{\Delta_{i-1}\mathcal C} - \C{\Delta_{i-1}\Delta_i\mathcal C} =
        \C{(c_{i-2},c_{i},c_{i-1}+1)}\OBR
        - \C{(c_{i-2},c_{i}+1,c_{i-1}+1)}
        + \C{(c_{i-1}+1,c_{i+1},c_{i})}\\
        - \C{(c_{i-1}+1,c_{i+1},c_{i}+1)}
        + \C{(c_{i},c_{i+2},c_{i+1})}\OBR
        - \C{(c_{i}+1,c_{i+2},c_{i+1})}.
    \end{multline}
    If we apply~\eqref{eq:cost-region-axes} and perform some straightforward
    simplifications, we get that \(\C{\Delta_{i-1}\mathcal C} -
    \C{\Delta_{i-1}\Delta_i\mathcal C}>0\) iif
    \begin{gather}
        \label{eq:proof-caches-condition-expanded}
        \begin{split}
            0& <  c_{i-1}(c_{i+1}+c_{i-2}-c_{i-1}-2)\OBR
            \OCR{\quad} + c_{i+1}(c_{i+2}-2c_i)+c_{i-2}-1
        \end{split}\\
        \begin{split}
            2c_{i-1}-c_{i-2}+1& < c_{i-1}(c_{i+1}+c_{i-2}-c_{i-1})\OBR
            \OCR{\quad} + c_{i+1}(c_{i+2}-2c_i).
        \end{split}
    \end{gather}

    We also know that \(\C{\Delta_i\mathcal C} < \C{\mathcal C}\). If we
    repeat the same procedure as before, we get that
    \begin{equation}
        \label{eq:proof-caches-cn_better-expanded}
        \begin{split}
            0& < c_{i-1}(c_{i+1}+c_{i-2}-c_{i-1})+c_{i+1}(c_{i+2}-2c_i-1) \\
            c_{i+1}& < c_{i-1}(c_{i+1}+c_{i-2}-c_{i-1})+c_{i+1}(c_{i+2}-2c_i).
        \end{split}
    \end{equation}
    So,~\eqref{eq:proof-caches-condition-expanded} is true iff
    \begin{equation}
        \label{eq:proof-caches-condition-after_cn}
        c_{i+1} \ge 2c_{i-1}-c_{i-2}+1.
    \end{equation}
    Considering that \(\C{\Delta_{i-1}\mathcal C} < \C{\mathcal C}\), we find
    that
    \begin{equation}
        \label{eq:proof-caches-cp_better_expanded}
        \begin{split}
            0& < c_{i-2}(c_i+c_{i-3}-c_{i-2})+c_i(c_{i+1}-2c_{i-1}-1) \\
            c_i c_{i+1}& > c_i(2c_{i-1}-c_{i-2}+1)+c_{i-2}(c_{i-2}-c_{i-3}).
        \end{split}
    \end{equation}
    So, from~\eqref{eq:proof-caches-cp_better_expanded} we obtain that
    \begin{equation}
        \label{eq:proof-caches-final-step}
        \begin{split}
            c_{i+1}& > 2c_{i-1}-c_{i-2}+1+\frac{c_{i-2}(c_{i-2}-c_{i-3})}{c_i}  \\
            & > 2c_{i-1}-c_{i-2}+1,
        \end{split}
    \end{equation}
    as \(c_{i-2} > c_{i-3}\).
\end{proof}






It is easy to extend this result for arbitrary advancements of either
\(c_{i-1}\) or \(c_i\). On the one hand, \(\C{\Delta_{i-1}^j\Delta_i \mathcal C}
= \C{\overbrace{\Delta_{i-1}\Delta_{i-1}\dots\Delta_{i-1}}^j\Delta_i \mathcal C}
< \C{\Delta_{i-1}^j{\mathcal C}} \) by Lemma~\ref{lm:proof-optimal}, as long as
\(c_{i-1} + j < c_i\), because Lemma~\ref{lm:proof-optimal} does not place any other restriction on the value of
\(c_{i-1}\). On the other
hand, \(\C{\Delta_{i-1}\Delta_i^k\mathcal C} <
\C{\Delta_{i-1}\Delta_i^{k-1}\mathcal C}\) if, as by hypothesis,
\(\C{\Delta_i^k\mathcal C} < \C{\Delta_i^{k-1}\mathcal C}\).

\section{Experimental Results}%
\label{sec:results}

We have tested the previous results with the help of a newly developed routing
module~\cite{rodriguezperez_ndngridcache_2022} for the
ndnSIM~\cite{noauthor_ndnsim_2022} NDN network simulator. We have also released
the software used to calculate the optimal cache
placement~\cite{rodriguezperez_fastgridcache_2022}.

We have simulated a 60\(\times \)42 grid network to have a setup that can be
representative of a current commercial one~\cite{bassa_analytical_2022}, even
though the developed routing module supports arbitrary satellite shells. In the
first experiment, we just wanted to test the behavior of the caches and the
routing algorithm. To this end, we calculated the location of the five caching
nodes in the top-right quadrant \( \{(11, 0), (18, 0), (24, 0), (0, 8), (0, 15)
\} \).\footnote{Note that the caches are still intercalated. Cache \((0,24)\)
    cannot be in the vertical axe, because its maximum size is \((0,21)\).} Then we
enabled the NDN cache in all these nodes, and the corresponding ones in the
three remaining quadrants. Finally, with a producer located at the center---node
\((0,0)\)---, we requested the same content from 1000 random locations. Consumers
at each location ask just once for the content, so the same content is requested
1000 times. Figure~\ref{fig:density-plot} shows the number of transmissions
performed by each node during the whole experiment.
\begin{figure}
    \centering
    \includegraphics[width=\columnwidth]{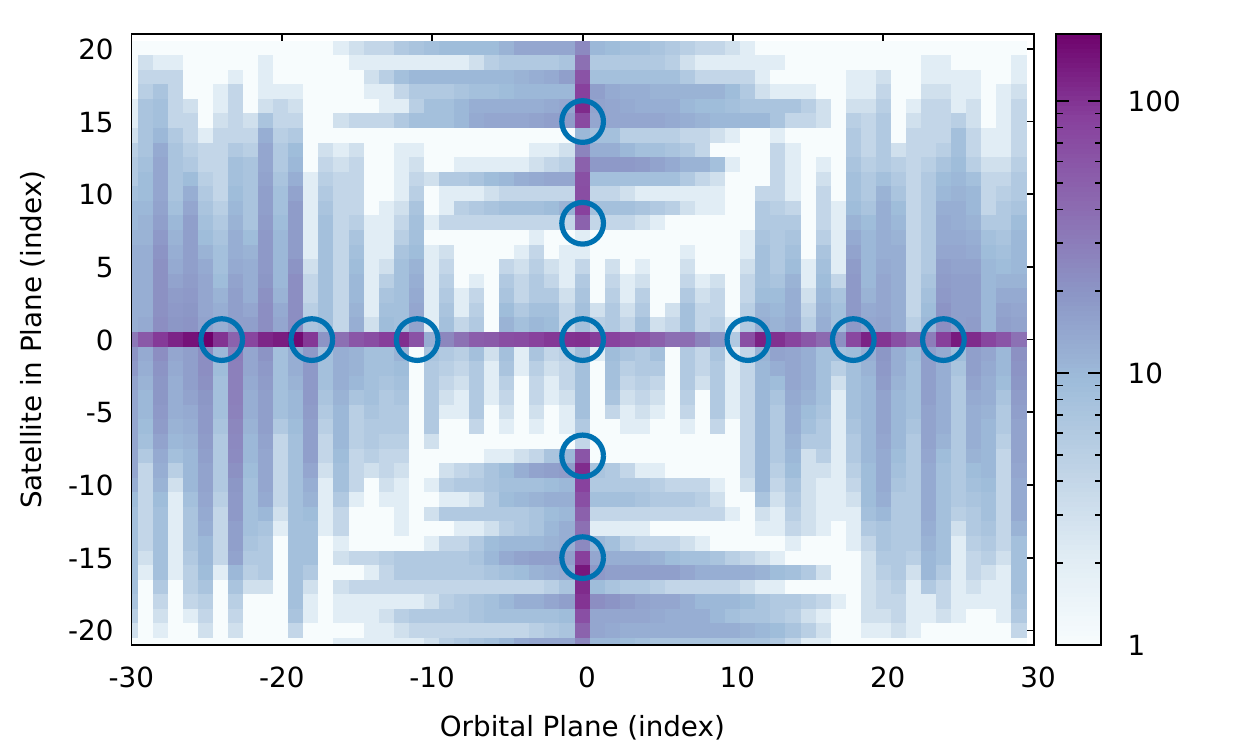}
    \caption{Number of transmission carried out by each node in a 60×42 grid
        network with 1000 randomly placed clients requesting the same data
        originally produced at node \((0,0)\). The circles show the locations of
        each caching node.}%
    \label{fig:density-plot}
\end{figure}
Nodes in both axes concentrate most of the transmissions, as the routing
algorithm drives the requests towards the axis with the closest allowed cache.
Also note that the mechanism is able to alleviate the load near the producer
since the number of transmissions carried out by nodes close to the producer is,
in fact, similar to those near caching nodes. Finally, one can discern in the
figure the areas covered by each caching node, as nodes in the boundaries
perform few transmissions as no traffic is directed to those regions, and
intensity grows higher the closer to a caching node.

The next experiment tests the accuracy of the theoretical results. In the same
network scenario as before, we measured the average transmission path length
until the data is obtained for different amount of caching nodes and a growing
number of randomly placed clients. The experiment has been repeated 25 times,
varying the location of each client.
\begin{figure}
    \centering
    \includegraphics[width=\columnwidth]{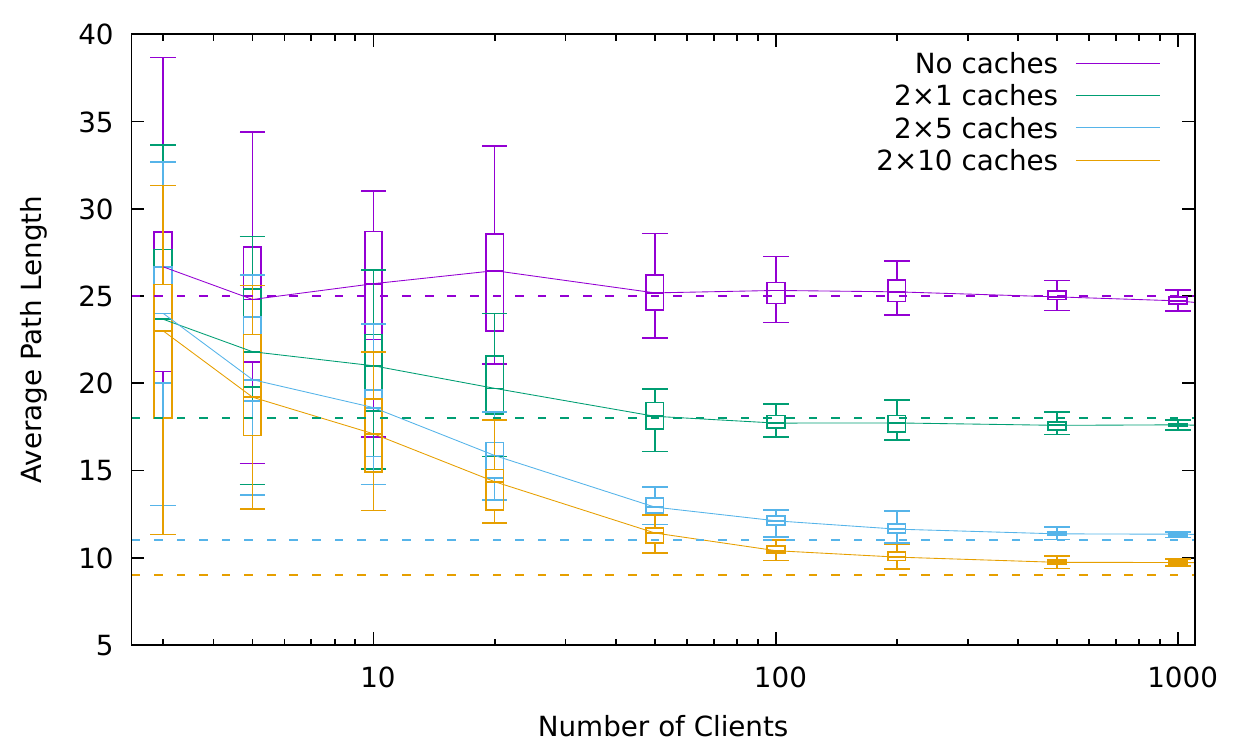}
    \caption{Evolution of the average path length vs.\ the number of clients.
        Hashed lines represent the theoretical values.}%
    \label{fig:path-vs-clients}
\end{figure}
Figure~\ref{fig:path-vs-clients} shows the results obtained when there are no
caching nodes, and for one, five and ten caching nodes in the positive parts of
the axes (and the corresponding set of caching nodes in the negative parts, thus
the \(2\times n\) notation). The theoretical values were calculated according
to~\eqref{eq:avdistance-region-axes-total} for the cache locations resulting
from Algorithm~\ref{alg:cache-placement-axes}. When the number of clients is
very small, there is a great variability in the results, as clients can be at very
different distances from the caches or the producer. As the number of clients
increases, and they get more evenly placed in the network, the variability
diminishes, and we can observe that the resulting average path length converges
to the theoretical value.

The simulation length also plays an important role in the results. To show this,
we have repeated the previous experiment but, this time, modifying the
simulation length for a constant value of 1000 randomly placed clients. The
request from each client happens at a random instant during the whole simulation. As
before, each simulation was repeated 25 times, varying the location of the
clients.
\begin{figure}
    \centering
    \includegraphics[width=\columnwidth]{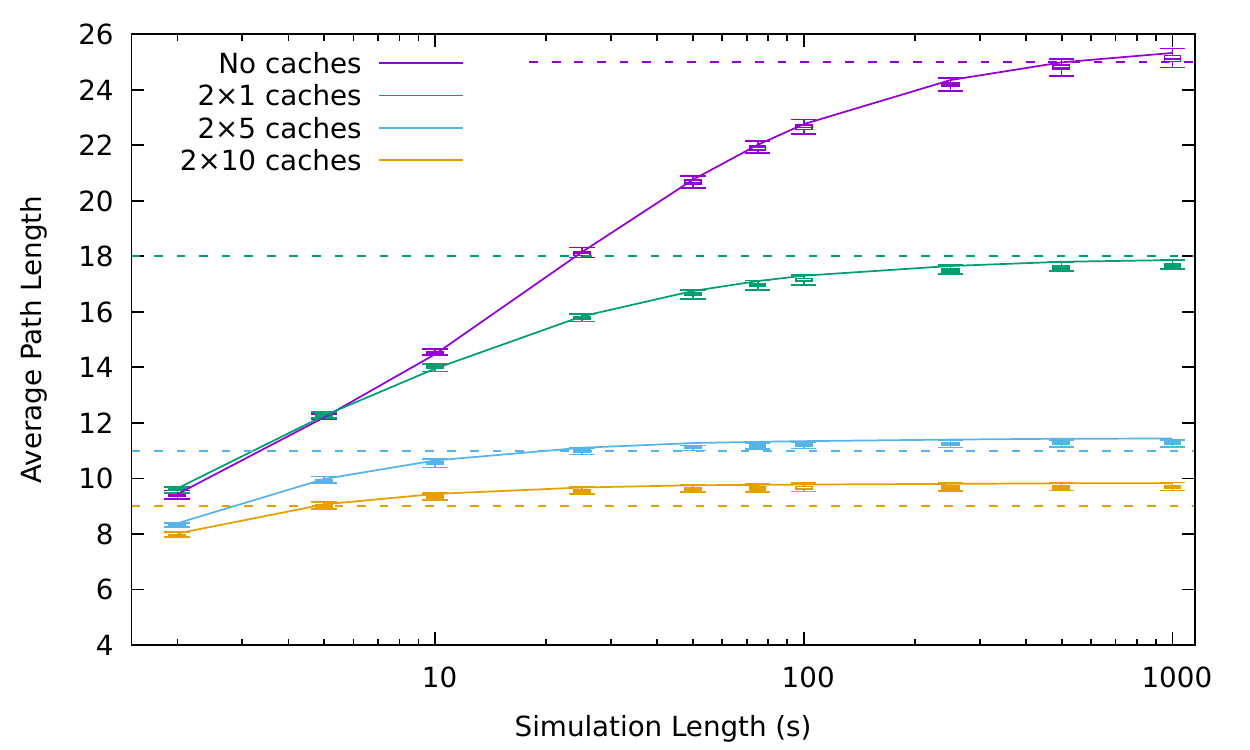}
    \caption{Evolution of the average path length vs.\ the simulation length. Hashed
        lines represent the theoretical values.}%
    \label{fig:path-vs-length}
\end{figure}
The results in Fig.~\ref{fig:path-vs-length} show that, when the simulation
duration is short, the results are much better than those predicted
by~\eqref{eq:avdistance-region-axes-total}. The reason for this is that, with
short simulation lengths, all the requests from the different clients happen in
a very short interval, so a second request from a different client can reach a
node that has still pending a previous request from another client, even if it is
not a caching node. As NDN coalesces requests for the same Interest, the effect
is similar as caching. When the request is finally satisfied by a downstream
node, all the pending requests will get a copy of the requested Data. As the
simulation duration increases, the number of simultaneous requests decreases and
the only effective data saving measure is the caching mechanism.\footnote{The
    longest simulation lengths are only included to show the asymptotic behavior of
    the algorithm. In an actual LEO constellation, the satellite serving a given
    producer changes every few minutes.}

Finally, we would like to pay attention to the different compromises between the
two considered alternatives for cache placement: regular cache placement and
in-axes cache placement. To this end we have calculated the resulting average
path length and the total number of caching nodes needed for both alternatives
for a simple 24\(\times \)24 network. We can observe in
Fig.~\ref{fig:caches-comparison-24x24} the average distance from a node to the
nearest cache for both the regular and the in-axes cache placement strategies.
It is evident that the greatest reduction in average distances happens for just
a few caching nodes.
\begin{figure}
    \centering
    \includegraphics[width=\columnwidth]{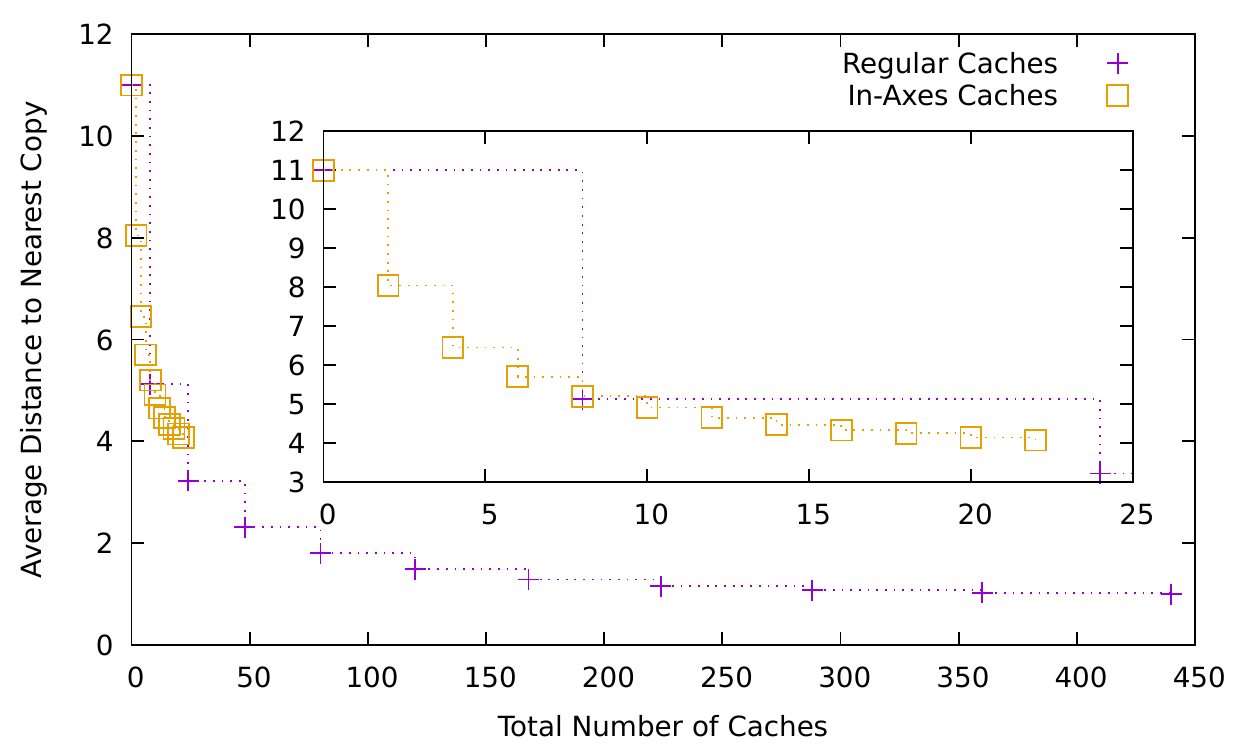}
    \caption{Comparison of cache efficacy between regular and in-axes cache
        placement strategies for a 24\(\times \)24 satellite constellation.}%
    \label{fig:caches-comparison-24x24}
\end{figure}
For instance, there is a \SI{45}{\percent} reduction for just 4 caching nodes
and a \SI{64}{\percent} reduction for 16 nodes. From there, as the number of caching
nodes increases, the improvement is marginal. For a small number of caching
nodes, both strategies produce very similar results, although only the regular
cache placement strategy is able to use a very high number of caching nodes.
However, when the number of caching nodes is small, the in-axes strategy
provides much more flexibility. Recall that, as shown
in~\eqref{eq:total-caches-sq}, the regular cache placement strategy places
stringent conditions on the possible total number of caching nodes.

\section{Conclusions}%
\label{sec:conclusions}

The recently deployed massive LEO satellite constellations serving as communication
backends for packet switched networks are an opportunity to explore new network
architectures that can be better suited to the job than the ubiquitous TCP/IP
one. The NDN architecture, for instance, with its inbuilt network caching capabilities,
may be used instead of custom CDN solutions to alleviate scarce network capacity and,
at the same time, reduce content delivery delay. This paper explores the issue
of cache placement considering the very regular structure of a satellite
constellation.

We have made a proposal able to simultaneously spread traffic through the
network, maximizing resource utilization and global capacity, and to concentrate
related traffic---traffic to a common producer---on a handful of network paths. Thus, by
placing caching nodes in these network paths, we can obtain high cache hit rates. In
our proposal, the decision about whether to cache a piece of data depends on the
relative locations of the forwarding node and the producer. As different nodes
are responsible for caching different pieces of content, the memory requirements
for the caching memory of the routers are equalized.

We have compared two different strategies for establishing the location of caching nodes
relative to the producers: an \emph{in-axes} alternative that places the caching
nodes either in the same orbital plane or in different orbital planes, but
similar latitude; and a second alternative that divides the constellation into
regular regions, each one served by a single caching node. For the in-axes cache
placement strategy, we have presented a linear-cost algorithm to obtain the optimal
location of a given number of caching nodes that minimizes path lengths. Experimental
results show that most of the performance is gained with just a few caching nodes per
piece of content and that, for a small number of caching nodes, both alternatives
produce similar results, although the in-axes approach is more flexible.

Future work includes exploring the possibility to apply these cache placement
algorithms in other regular network structures, like, for instance, the power
grid.

\section*{Acknowledgements}%
\label{sec:acks}

We wish to sincerely thank Margarita González-Romero for her insights and
suggestions for dealing with the proof of Lemma~\ref{lm:proof-optimal}.

\appendix{About the Interleaved Cache Placement}%
\label{sec:appendix}

\begin{lemma}%
    \label{lm:proof-intercalated}
    The optimum way to place the caches in the axes is in an interleaved manner,
    that is, \(h_i < v_i < h_{i+1}\).
\end{lemma}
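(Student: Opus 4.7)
The plan is a local exchange argument: if a placement $\mathcal C = \mathcal H \cup \mathcal V$ is not interleaved, I would show that a single-cache modification strictly reduces~\eqref{eq:avdistance-region-axes-total}, so no non-interleaved placement can be optimal. Merging and sorting the cache positions as $c_1 < c_2 < \cdots < c_N$, let $k$ be the smallest index for which $c_k$ and $c_{k+1}$ lie on the same axis; by the symmetry between the two axes I may assume both lie in $\mathcal H$, so that no vertical cache separates them. I would then consider the modified placement $\mathcal C'$ obtained by moving $c_{k+1}$ from $\mathcal H$ to $\mathcal V$ at the same coordinate value, and argue that $\C{\mathcal C'} < \C{\mathcal C}$.

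To evaluate the change, only the regions adjacent to $c_k$ and $c_{k+1}$ need be considered. Before the swap, $\mathcal A(c_k)$ has a short vertical extent (capped by the last $v$-cache below $c_k$), while $\mathcal A(c_{k+1})$ is vertically tall, extending up to the first $v$-cache above $c_{k+1}$ or to the grid boundary. After the swap, the updated $\mathcal A(c_k)$ widens horizontally---it now extends up to $c_{k+2}$---but is capped vertically at $c_{k+1}$ by the newly promoted vertical cache, while the latter serves a short-and-wide region. Applying~\eqref{eq:cost-region-axes} to each affected region and summing, the difference $\C{\mathcal C} - \C{\mathcal C'}$ collapses to an explicit polynomial in $c_{k-1}, c_k, c_{k+1}, c_{k+2}$ and the neighbouring $v$-cache, which I expect to be strictly positive---the core reason being that the bilinear term $b(a_2-a_1)$ in~\eqref{eq:cost-region-axes} is strictly subadditive when a tall-and-thin rectangle is replaced by a shorter-and-wider one of comparable area.

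Because the axis-assignment configuration space is finite and every non-interleaved state admits a strictly cost-reducing swap, any sequence of such swaps terminates at an interleaved configuration of strictly smaller cost. In particular no non-interleaved placement can be optimal, which proves the lemma. The main obstacle will be the careful treatment of region boundaries in the non-interleaved case: the quantity $v_k$ used in the definition of $\mathcal A(h_i)$ in Section~\ref{sec:axes-cache-placement} is written under the interleaving assumption, so one must first re-derive the correct rectangular shape of $\mathcal A(c_k)$ and $\mathcal A(c_{k+1})$---namely the smallest vertical cache strictly greater than the cache's coordinate, truncated by the grid height $v$---and then separately handle the boundary subcases $k=1$ and $k+1=N$, where the outer bound is $v$ or $h$ rather than a neighbouring cache.
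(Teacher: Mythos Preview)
Your plan is the same genus of argument as the paper's---a local exchange showing that any non-interleaved assignment can be strictly improved---but the local move you choose is different. The paper does not relocate a single cache between axes; instead it takes a non-interleaved configuration in which an adjacent pair of positions has its axis labels transposed (the horizontal axis carries the larger of the two and the vertical axis the smaller) and \emph{swaps} those two positions between the axes, keeping $|\mathcal H|$ and $|\mathcal V|$ fixed. It then expands both costs via~\eqref{eq:cost-caches-axes-expanded} and~\eqref{eq:cost-region-axes} and reduces the sign of the difference to two elementary inequalities among the neighbouring cache coordinates, which it verifies directly.

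Your move---relabelling the second of two consecutive same-axis caches to the other axis at the \emph{same} coordinate---is a legitimate alternative, and your termination argument (finite configuration space, strictly decreasing cost) is sound \emph{provided} the cost actually drops at every step. That is precisely the part you have not done: the sentence ``which I expect to be strictly positive---the core reason being that the bilinear term $b(a_2-a_1)$ \ldots is strictly subadditive'' is a heuristic, not a proof, and the subadditivity remark is too vague to pin down a sign. The paper's entire content for this lemma is the explicit algebraic verification of the analogous inequality, and yours requires the same: write out the affected $\C{(a_1,a_2,b)}$ terms before and after the move, subtract, and check positivity. Your own caveat about region boundaries is well taken---because your move changes $|\mathcal H|$ and $|\mathcal V|$, the vertical bound $v_k$ in the definition of $\mathcal A(h_i)$ shifts for more neighbouring regions than in the paper's cardinality-preserving swap, so the bookkeeping is a little heavier, and you also need to cover the case where the alternating prefix begins with a $v$-cache rather than an $h$-cache. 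Until that computation is actually carried out, the proposal is a correct outline but not yet a proof.
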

\begin{proof}
    Consider a non-interleaved optimum solution to the problem \(\mathcal{H} =
    \{h_1, \dots, h_i, v_j, h_k, \dots, h_n\} \) and \( \mathcal{V} = \{v_1,
    \dots, v_i, h_j, v_k, \dots, v_n\} \), where \(h_i < v_i < h_{i+1}\).

    According to~\eqref{eq:cost-caches-axes-expanded}, the cost of this solution
    is
    \begin{equation}
        \begin{split}
            \C{\mathcal H  \cup \mathcal V}& = \C{(0, v_1, h_1)} + \C{(h_1, h_2, v_1)} +
            \C{(v_1, v_2,h_2)}\OBR
            \OCR{\quad} + \cdots + \C{(h_i, v_j,v_i)}+\C{(v_i, h_j,v_j)}\\
            & \quad + \C{(h_j, v_k,v_j)}+\C{(v_j, h_k,v_k)}\OBR
            \OCR{\quad} + \cdots + \C{(v_{n-1}, v_n, h_n)}.
        \end{split}
    \end{equation}
    If we swap now \(v_j\) and \(h_j\), so that \( \mathcal{H}' = \{h_1, \dots,
    h_i, h_j, h_k, \dots, h_n\} \) and \( \mathcal{V}' = \{v_1, \dots, v_i, v_j,
    v_k, \dots, v_n\} \), now the cost becomes
    \begin{equation}
        \begin{split}
            \C{\mathcal H'  \cup \mathcal V'}& = \C{(0, v_1, h_1)} + \C{(h_1, h_2, v_1)} +
            \C{(v_1, v_2,h_2)}\OBR
            \OCR{\quad} + \cdots + \C{(h_i, h_j,v_i)} + \C{(v_i, v_j, h_j)}\\
            & \quad + \C{(h_j, h_k, v_j)} + \C{(v_j, v_k, h_k)}\OBR
            \OCR{\quad} + \cdots + \C{(v_{n-1}, v_n, h_n)}.
        \end{split}
    \end{equation}

    The difference
    \begin{equation}
        \begin{split}
            \C{\mathcal H  \cup \mathcal V} - \C{\mathcal H'  \cup \mathcal V'}
            &= h_i h_j v_i - h_j v_i^2 - h_k v_j^2 \OBR
            \OCR{\quad} + {\left(h_j h_k - h_i v_i + v_i^2\right)} v_j \OBR
            \OCR{\quad}- {\left(h_j v_j - v_j^2\right)} v_k\\
            & = v_j{\left(v_i^2+v_j v_k - v_j h_k - h_i v_i\right)} \OBR
            \OCR{\quad} - h_j{\left(v_i^2+v_j v_k - v_j h_k - h_i v_i\right)}\OBR
            \OCR{} > 0
        \end{split}
    \end{equation}
    if
    \begin{equation}
        \label{eq:lema2-final-condition1}
        v_j > h_j,
    \end{equation}
    that is true because it is just a hypothesis of
    Lemma~\ref{lm:proof-intercalated}, and
    \begin{equation}
        \label{eq:lema2-final-condition2}
        \begin{split}
            v_i^2+v_j v_k - v_j h_k - h_i v_i\OCR{} >
            h_i v_i+v_j v_k - v_j h_k - h_i v_i\OBR
            \OCR{} > h_i v_i+v_j h_k - v_j h_k - h_i v_i\OBR
            \OCR{} > 0,
        \end{split}
    \end{equation}
    which is easy to check as both \(h_i < v_i\) and
    \(h_k < v_k\).
\end{proof}

\bibliographystyle{IEEEtaes}
\balance{}
\bibliography{icarus-axes}

\end{document}